\newtheorem{Thm}{Theorem}
\newtheorem{Lem}[Thm]{Lemma}
\newtheorem{Cor}[Thm]{Corollary}
\theoremstyle{definition}
\newtheorem{Def}[Thm]{Definition}
\newcommand{\ket}[1]{\left|{#1}\right\rangle}
\newcommand{\bra}[1]{\left\langle{#1}\right|}
\begin{document} 

\title{State transfer with quantum side information}

\author{Yonghae Lee}%\email{yonghaelee@khu.ac.kr}
\author{Soojoon Lee}%\email{level@khu.ac.kr}
\affiliation{
 Department of Mathematics and Research Institute for Basic Sciences,
 Kyung Hee University, Seoul 02447, Korea
}

\pacs{
03.67.Hk, % Quantum communication
89.70.Cf, % Entropy and other measures of information 
03.67.Mn  % Entanglement production, characterization and manipulation
}
\date{\today}

%%%%%%%%%%%%%%%%%%%%%%%%%%%%%%%%%%%%%%%%%%%%%%%%%%%%%%%%%%%%%%%%%%%%%%%%%%%%%%%%%%%%%%%%%%%%%%%%%%%%
%%%                                                                                                                           
%%%   Abstract
%%%                                                                                                                           
%%%%%%%%%%%%%%%%%%%%%%%%%%%%%%%%%%%%%%%%%%%%%%%%%%%%%%%%%%%%%%%%%%%%%%%%%%%%%%%%%%%%%%%%%%%%%%%%%%%%
\begin{abstract}
We first consider quantum communication protocols 
between a sender Alice and a receiver Bob, 
which transfer Alice's quantum information to Bob 
by means of non-local resources, 
such as classical communication, quantum communication, and entanglement. 
In these protocols, 
we assume that Alice and Bob may have quantum side information, 
not transferred. 
In this work, 
these protocols are called the state transfer with quantum side information.
We determine the optimal costs for non-local resources in the protocols, 
and study what the effects of the use of quantum side information are.
Our results can give new operational meanings 
to the quantum mutual information and the quantum conditional mutual information, 
which directly provide us with an operational interpretation of the chain rule 
for the quantum mutual information.
\end{abstract}

\maketitle
%%%%%%%%%%%%%%%%%%%%%%%%%%%%%%%%%%%%%%%%%%%%%%%%%%%%%%%%%%%%%%%%%%%%%%%%%%%%%%%%%%%%%%%%%%%%%%%%%%%%
%%%                                                                                                                           
%%%   Introduction
%%%                                                                                                                           
%%%%%%%%%%%%%%%%%%%%%%%%%%%%%%%%%%%%%%%%%%%%%%%%%%%%%%%%%%%%%%%%%%%%%%%%%%%%%%%%%%%%%%%%%%%%%%%%%%%%
\section{Introduction}

There are quantum communication protocols,
such as the quantum teleportation~\cite{BBCJPW93} and the Schumacher compression~\cite{S95},
which transfer quantum information from Alice to Bob.
In quantum information theory,
these protocols have been regarded as the leading research topics,
since they can provide new operational meanings
to quantum quantities,
such as the von Neumann entropies~\cite{W13} and the smooth entropies~\cite{KRS09}. 
New operational meanings have made the quantum information
theory richer through intuitive understandings
of quantum phenomena. 

We here consider protocols in which Alice's information can be asymptotically transferred to Bob 
by means of quantum/classical communication and entanglement as non-local resources.
In the protocols,
Alice and Bob are able to apply local operations on their states,
and employ their quantum side information (QSI)
in order to transfer Alice's information.
We call the protocols the {\em state transfer with QSI},
and divide the state transfer protocols with QSI into two types:
the {\em state redistribution with QSI} and the {\em state merging with QSI}.
In the former Alice and Bob use quantum channels for communication from Alice to Bob,
and in the latter they use classical channels.

Although there have been some protocols~\cite{HOW05,HOW07,DY08,YD09,D06,ADHW09,F04,LKPL09,O08}
which deal with Alice's or Bob's QSI,
the results have not explicitly explained
how the use of QSI has the effects on the optimal resource costs.
In addition,
when Alice and Bob can use more (or less) QSI,
it has not been mentioned in literature.
On this account,
one can raise the following two questions:
(i) How does the use of Alice's and Bob's QSI affect the optimal resource costs
in the state transfer with QSI?
(ii) Assume that Alice or Bob uses more (or less) QSI
in the state transfer with QSI.
How does the use of more (or less) QSI affect the optimal resource costs?

In order to answer the two questions,
we describe a mathematical definition of the state transfer with QSI,
and calculate its optimal costs for non-local resources.
Then we study the effects of QSI on the optimal resource costs
of the state transfer with QSI.
From these results,
we present new operational meanings 
of the quantum mutual information (QMI),
quantum conditional mutual information (QCMI), 
and a new operational interpretation of the chain rule
for the QMI~\cite{W13}.

This paper is organized as follows.
In Sec.~\ref{sec:STwQSI}
we define the state transfer with QSI,
and calculate its optimal costs for non-local resources. 
In Sec.~\ref{sec:Effects_of_QSI}
we study what the effects of the use of QSI are
in the state transfer with QSI.
Then we give new operational meanings
to the QMI and the QCMI in Sec.~\ref{sec:New_operational_meanings}.
We also present well-known examples
which are special cases of the state transfer with QSI
in Sec.~\ref{sec:Examples_of_STwQSI}.
Finally, in Sec.~\ref{sec:Conclusion}
we summarize and discuss our results.

%%%%%%%%%%%%%%%%%%%%%%%%%%%%%%%%%%%%%%%%%%%%%%%%%%%%%%%%%%%%%%%%%%%%%%%%%%%%%%%%%%%%%%%%%%%%%%%%%%%%
%%%                                                                                                                           
%%%   State transfer with QSI
%%%                                                                                                                           
%%%%%%%%%%%%%%%%%%%%%%%%%%%%%%%%%%%%%%%%%%%%%%%%%%%%%%%%%%%%%%%%%%%%%%%%%%%%%%%%%%%%%%%%%%%%%%%%%%%%
\section{State transfer with QSI} \label{sec:STwQSI}
We formally define the state transfer with QSI as follows.

\begin{Def}[State transfer with QSI] \label{def:STwQSI}
Let $\ket{\psi}\equiv\ket{\psi}_{A_{1}\cdots A_{m}C_{\mathrm{A}}B_{1}\cdots B_{n}R}$
be a pure initial state,
where Alice and Bob hold $A_{1}\cdots A_{m}C_{\mathrm{A}}$ and ${B_{1}\cdots B_{n}}$,
respectively,
and $R$ is the reference.
Assume that Alice and Bob have additional systems
$E_{\mathrm{A}}^{\mathrm{in}}$, $E_{\mathrm{A}}^{\mathrm{out}}$ 
and $E_{\mathrm{B}}^{\mathrm{in}}$, $E_{\mathrm{B}}^{\mathrm{out}}$ for entanglement resources,
respectively.
For $0\le i\le m$ and $0\le j\le n$,
a joint operation
\begin{eqnarray} 
\mathcal{T}_{ij}&:&A_{1}\cdots A_{i}C_{\mathrm{A}}E_{\mathrm{A}}^{\mathrm{in}}\otimes
B_{1}\cdots B_{j}E_{\mathrm{B}}^{\mathrm{in}} \nonumber \\
&&\longrightarrow A_{1}\cdots A_{i}E_{\mathrm{A}}^{\mathrm{out}}\otimes
C_{\mathrm{B}}B_{1}\cdots B_{j}E_{\mathrm{B}}^{\mathrm{out}} \nonumber
\end{eqnarray}
is called the \emph{state transfer with QSI of $\ket{\psi}$}
(or $\mathrm{tr}_{R}(\ket{\psi}\bra{\psi})$ \emph{with error $\varepsilon$},
if it consists of local operations and either qubit channels or bit channels from Alice to Bob,
and satisfies
\begin{eqnarray}
&F\Big(
(\mathcal{T}_{ij}\otimes\mathds{1}_{A_{i+1}\cdots A_{m}B_{j+1}\cdots B_{n}R})
\big(\ket{\psi}
\otimes\ket{\Phi}_{E_{\mathrm{A}}^{\mathrm{in}}E_{\mathrm{B}}^{\mathrm{in}}}\big), \nonumber \\
&\ket{\psi'}\otimes\ket{\Phi}_{E_{\mathrm{A}}^{\mathrm{out}}E_{\mathrm{B}}^{\mathrm{out}}} 
\Big)\ge1-\varepsilon, \nonumber
\end{eqnarray}
where 
$C_{\mathrm{B}}$ is Bob's system with $\dim C_{\mathrm{B}}=\dim C_{\mathrm{A}}$,
$F(\cdot,\cdot)$ is the quantum fidelity,
$\ket{\psi'}$ is a final state defined as 
$\left(\mathds{1}_{A_{1}\cdots A_{m}B_{1}\cdots B_{n}R}
\otimes\mathds{1}_{C_{\mathrm{A}}\to C_{\mathrm{B}}}\right)\ket{\psi}$, 
and $\ket{\Phi}_{E_{\mathrm{A}}^{\mathrm{in}}E_{\mathrm{B}}^{\mathrm{in}}}$
and $\ket{\Phi}_{E_{\mathrm{A}}^{\mathrm{out}}E_{\mathrm{B}}^{\mathrm{out}}}$
are maximally entangled states 
with Schmidt-rank $e^{\mathrm{in}}(\mathcal{T}_{ij})$ and $e^{\mathrm{out}}(\mathcal{T}_{ij})$, 
respectively.

In addition, 
we call the operation $\mathcal{T}_{ij}$ the \emph{state redistribution with QSI},
if it consists of local operations and $q(\mathcal{T}_{ij})$ qubit channels
without any classical channels,
and $\mathcal{T}_{ij}$ is called the \emph{state merging with QSI},
if it consists of local operations and $c(\mathcal{T}_{ij})$ bit channels
without any quantum channels.
\end{Def}

\begin{figure}
\centering
\includegraphics[width=.9\linewidth,trim=0cm 0cm 0cm 0cm]{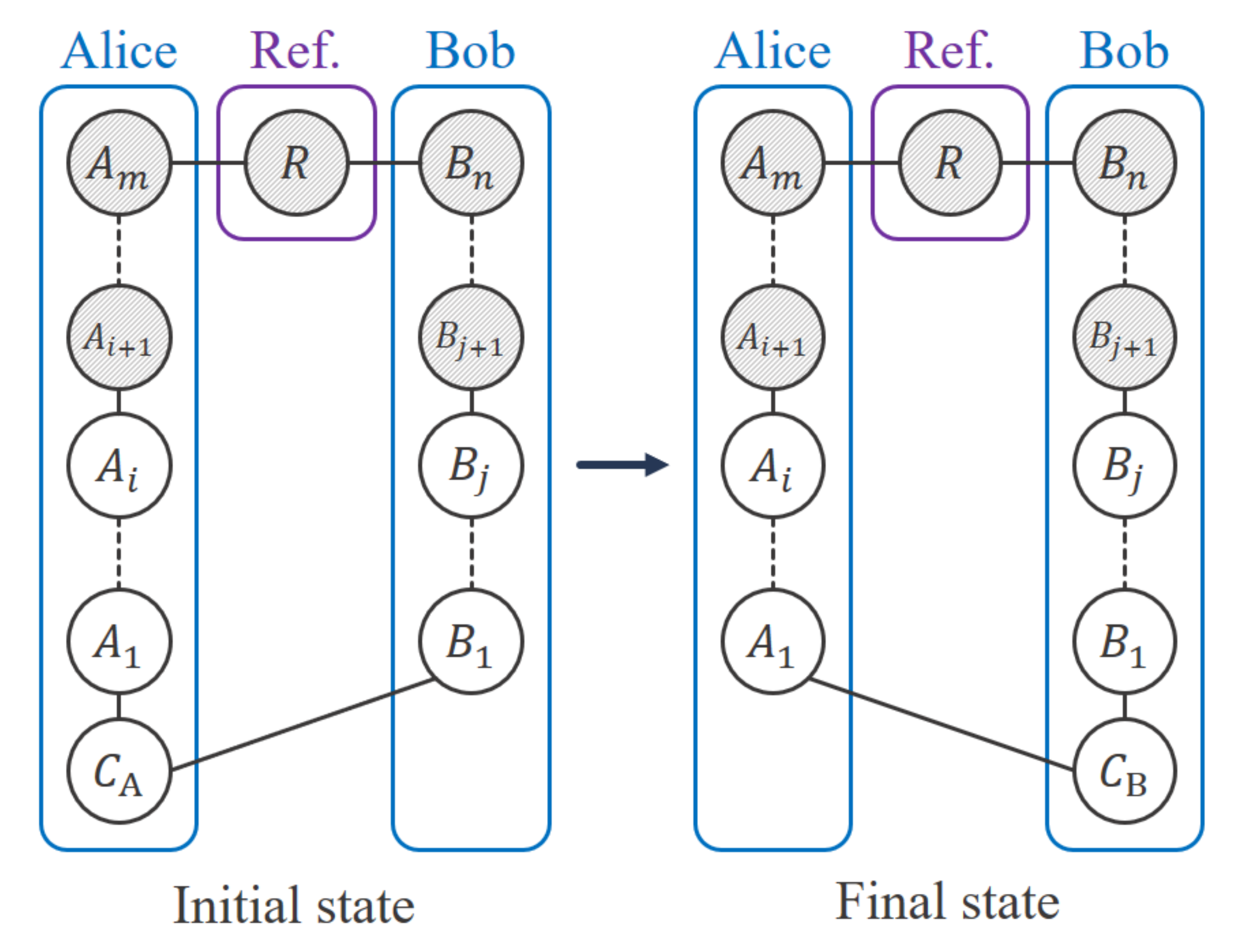}
\caption{
The initial and final states
for the state transfer with QSI of $\ket{\psi}$: $C_{\mathrm{A}}$ is a transferred part,
$A_{1}\cdots A_{m}$ and $B_{1}\cdots B_{n}$ are Alice's and Bob's QSI, and $R$ is the reference.
In the state transfer with QSI,
they use QSI $A_{1}\cdots A_{i}$ and $B_{1}\cdots B_{j}$, respectively,
while the rests $A_{i+1}\cdots A_{m}$ and $B_{j+1}\cdots B_{n}$ are left unused.
}
\label{fig:Protocol}
\end{figure}

In Definition~\ref{def:STwQSI}, 
the indices $i$ and $j$ of $\mathcal{T}_{ij}$ mean 
that Alice and Bob apply local operations on their QSI $A_{1}\cdots A_{i}$ and $B_{1}\cdots B_{j}$
in order to transfer Alice's $C_{\mathrm{A}}$ to Bob as depicted in Fig.~\ref{fig:Protocol},
and in this situation
we say that Alice and Bob \emph{use} their QSI $A_{1}\cdots A_{i}$ and $B_{1}\cdots B_{j}$.
For instance, Alice and Bob do not use any QSI if and only if $i=0$ and $j=0$, respectively, 
and they make use of the whole QSI if and only if $i=m$ and $j=n$, respectively.

We also define the optimal resource costs of the state transfer with QSI of $\ket{\psi}$
for fixed $i$ and $j$.

\begin{Def}
For $n$ independent and identically distributed copies
of $\ket{\psi}\equiv\ket{\psi}_{A_{1}\cdots A_{m}C_{\mathrm{A}}B_{1}\cdots B_{n}R}$,
say $\ket{\psi}^{\otimes n}$,
let $\mathcal{T}_{ij}^n$ be a state redistribution (or a state merging)
with QSI of $\ket{\psi}^{\otimes n}$ with error $\varepsilon_n$, 
then the resource rates
$(\log e^{\mathrm{in}}(\mathcal{T}_{ij}^n)-\log e^{\mathrm{out}}(\mathcal{T}_{ij}^n))/n$
and $q(\mathcal{T}_{ij}^n)/n$ 
(or $c(\mathcal{T}_{ij}^n)/n$) are called the \emph{entanglement rate}
and \emph{quantum communication rate} 
(or \emph{classical communication rate}) of the protocol,
respectively.

For each resource rate,
we call a real number $r$ an \emph{achievable rate}
if there is a sequence $\{\mathcal{T}_{ij}^n\}_{n\in\mathbb{N}}$
such that the sequence $\{\varepsilon_n\}_{n\in\mathbb{N}}$ converges to zero,
and the sequence for the resource rate converges $r$ 
as $n$ tends to infinity.
The smallest achievable rates for entanglement and quantum communication
(or classical communication) are called
the \emph{optimal entanglement cost} and \emph{optimal quantum communication cost}
(or \emph{optimal classical communication cost}),
respectively.
\end{Def}

We investigate the optimal resource costs for the state redistribution with QSI
of $\ket{\psi}\equiv\ket{\psi}_{A_{1}\cdots A_{m}C_{\mathrm{A}}B_{1}\cdots B_{n}R}$.
Let $Q_{i,j}$ and $E_{i,j}$ be its optimal quantum communication and entanglement costs,
respectively,
when Alice and Bob use QSI $A_{1}\cdots A_{i}$ and $B_{1}\cdots B_{j}$.
Let $\tilde{A}=A_{1}\cdots A_{i}$, $\tilde{B}=B_{1}\cdots B_{j}$,
and $\tilde{R}=A_{i+1}\cdots A_{m}B_{j+1}\cdots B_{n}R$.
Then the given state $\ket{\psi}$ becomes a four-partite state
$\ket{\psi}_{\tilde{A}C_{\mathrm{A}}\tilde{B}\tilde{R}}$.
Since $A_{i+1}\cdots A_{m}$ and $B_{j+1}\cdots B_{n}$ are not used,
and $\tilde{R}$ can be considered as the reference system of a purification $\ket{\psi}$ 
of a quantum state $\rho_{\tilde{A}C_{\mathrm{A}}\tilde{B}}$,
our state redistribution with QSI is identical to the state redistribution 
for $\ket{\psi}_{\tilde{A}C_{\mathrm{A}}\tilde{B}\tilde{R}}$~\cite{DY08,YD09}.
Thus, 
we can obtain that
\begin{eqnarray}
Q_{i,j}&=&\frac{1}{2}I(C_{\mathrm{A}};\tilde{R}|\tilde{B}) \nonumber \\
       &=&H(C_{\mathrm{A}})-\frac{1}{2}I(C_{\mathrm{A}};\tilde{A})
                           -\frac{1}{2}I(C_{\mathrm{A}};\tilde{B}), \nonumber \\
E_{i,j}&=&\frac{1}{2}I(C_{\mathrm{A}};\tilde{A})-\frac{1}{2}I(C_{\mathrm{A}};\tilde{B}), \nonumber
\end{eqnarray} 
where $I(\cdot;\cdot|\cdot)$ is the QCMI,
$H(\cdot)$ is the von Neumann entropy and $I(\cdot;\cdot)$ is the QMI.
This implies the following lemma.

\begin{Lem} \label{lem:Opt_Cost}
For a state $\rho_{A_{1}\cdots A_{m}C_{\mathrm{A}}B_{1}\cdots B_{n}}$ shared by Alice and Bob, 
the optimal quantum communication cost $Q_{i,j}$ 
and the optimal entanglement cost $E_{i,j}$ for the state redistribution with QSI
can be expressed as the von Neumann entropy $H(C_{\mathrm{A}})$ 
and the QMI $I(C_{\mathrm{A}};A_{1}\cdots A_{i})$ and $I(C_{\mathrm{A}};B_{1}\cdots B_{j})$
as follows: 
\begin{eqnarray} \label{eq:Opt_Cost}
Q_{i,j}&=&H(C_{\mathrm{A}})-\frac{1}{2}I(C_{\mathrm{A}};A_{1}\cdots A_{i})
                           -\frac{1}{2}I(C_{\mathrm{A}};B_{1}\cdots B_{j}), \nonumber \\
E_{i,j}&=&\frac{1}{2}I(C_{\mathrm{A}};A_{1}\cdots A_{i})
         -\frac{1}{2}I(C_{\mathrm{A}};B_{1}\cdots B_{j}).
\end{eqnarray}
\end{Lem}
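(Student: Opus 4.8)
The plan is to reduce the state redistribution with QSI to the ordinary state redistribution task, apply its known optimal costs, and finish with an entropy identity. Fix a purification $\ket{\psi}_{A_{1}\cdots A_{m}C_{\mathrm{A}}B_{1}\cdots B_{n}R}$ of the shared state $\rho_{A_{1}\cdots A_{m}C_{\mathrm{A}}B_{1}\cdots B_{n}}$ and adopt the grouping $\tilde{A}=A_{1}\cdots A_{i}$, $\tilde{B}=B_{1}\cdots B_{j}$, $\tilde{R}=A_{i+1}\cdots A_{m}B_{j+1}\cdots B_{n}R$ introduced above. The first step is to make precise that the registers Alice and Bob do not use behave as an inert part of the reference. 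By Definition~\ref{def:STwQSI} the operation $\mathcal{T}_{ij}$ acts only on $A_{1}\cdots A_{i}C_{\mathrm{A}}E_{\mathrm{A}}^{\mathrm{in}}\otimes B_{1}\cdots B_{j}E_{\mathrm{B}}^{\mathrm{in}}$, while the target $\ket{\psi'}$ leaves $\tilde{R}$ untouched; hence the unused systems $A_{i+1}\cdots A_{m}$ and $B_{j+1}\cdots B_{n}$ enter neither the operation nor the success criterion except through their correlations with the transferred part. Grouping them with $R$ therefore turns the problem into state redistribution of $\ket{\psi}_{\tilde{A}C_{\mathrm{A}}\tilde{B}\tilde{R}}$, in which Alice holds $\tilde{A}C_{\mathrm{A}}$, Bob holds $\tilde{B}$, and $\tilde{R}$ is the reference.

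Granting this reduction, the optimal costs are given by the Devetak--Yard formulas $Q_{i,j}=\frac{1}{2}I(C_{\mathrm{A}};\tilde{R}|\tilde{B})$ and $E_{i,j}=\frac{1}{2}I(C_{\mathrm{A}};\tilde{A})-\frac{1}{2}I(C_{\mathrm{A}};\tilde{B})$ of \cite{DY08,YD09}. Substituting $\tilde{A}=A_{1}\cdots A_{i}$ and $\tilde{B}=B_{1}\cdots B_{j}$ already yields the claimed expression for $E_{i,j}$, so it remains to rewrite $Q_{i,j}$. The one nontrivial algebraic step is the identity
\begin{equation}
\frac{1}{2}I(C_{\mathrm{A}};\tilde{R}|\tilde{B})
=H(C_{\mathrm{A}})-\frac{1}{2}I(C_{\mathrm{A}};\tilde{A})-\frac{1}{2}I(C_{\mathrm{A}};\tilde{B}),
\end{equation}
which I would prove by expanding $I(C_{\mathrm{A}};\tilde{R}|\tilde{B})=H(C_{\mathrm{A}}\tilde{B})+H(\tilde{R}\tilde{B})-H(\tilde{B})-H(C_{\mathrm{A}}\tilde{B}\tilde{R})$ and using the purity of $\ket{\psi}_{\tilde{A}C_{\mathrm{A}}\tilde{B}\tilde{R}}$ to substitute $H(C_{\mathrm{A}}\tilde{B}\tilde{R})=H(\tilde{A})$ and $H(\tilde{R}\tilde{B})=H(\tilde{A}C_{\mathrm{A}})$. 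Expanding the mutual informations on the right-hand side the same way, both sides reduce to $H(C_{\mathrm{A}}\tilde{A})+H(C_{\mathrm{A}}\tilde{B})-H(\tilde{A})-H(\tilde{B})$, which settles the identity; the substitution of the group labels then gives the stated formula for $Q_{i,j}$.

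I expect the genuine content to reside in the reduction step, not in the bookkeeping. One must verify that merging the untouched local registers into the reference neither adds nor removes achievable resource pairs, i.e.\ that the converse and achievability of \cite{DY08,YD09} transfer verbatim. This follows because every protocol for the QSI task is literally a state-redistribution protocol for $\ket{\psi}_{\tilde{A}C_{\mathrm{A}}\tilde{B}\tilde{R}}$ and conversely, so the two tasks have identical rate regions. The remaining entropy manipulation relies only on the entropy duality for pure states and the definitions of the QMI and QCMI, and is routine.
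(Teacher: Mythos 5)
Your proposal is correct and takes essentially the same route as the paper: the paper likewise groups the unused registers $A_{i+1}\cdots A_{m}B_{j+1}\cdots B_{n}$ with $R$ into $\tilde{R}$, identifies the task with ordinary state redistribution of $\ket{\psi}_{\tilde{A}C_{\mathrm{A}}\tilde{B}\tilde{R}}$, and invokes the Devetak--Yard costs $\frac{1}{2}I(C_{\mathrm{A}};\tilde{R}|\tilde{B})$ and $\frac{1}{2}I(C_{\mathrm{A}};\tilde{A})-\frac{1}{2}I(C_{\mathrm{A}};\tilde{B})$. The only difference is that you spell out the purity-based entropy identity rewriting $\frac{1}{2}I(C_{\mathrm{A}};\tilde{R}|\tilde{B})$ as $H(C_{\mathrm{A}})-\frac{1}{2}I(C_{\mathrm{A}};\tilde{A})-\frac{1}{2}I(C_{\mathrm{A}};\tilde{B})$, which the paper asserts without derivation.
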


By replacing qubit channels with bit channels,
we can consider the state merging with QSI of the state $\ket{\psi}$.
For each $0\le i\le m$ and $0\le j\le n$, 
let $c_{i,j}$ and $e_{i,j}$ be 
the optimal classical communication and entanglement costs of the state merging with QSI,
respectively,
when Alice and Bob employ QSI $A_{1}\cdots A_{i}$ and $B_{1}\cdots B_{j}$.
Then we obtain the following lemma.

\begin{Lem} \label{lem:optimal_c_costs}
For each $0\le i\le m$ and $0\le j\le n$,
the optimal classical communication cost $c_{i,j}$ and the optimal entanglement cost $e_{i,j}$
for the state merging with QSI can be expressed
in terms of the optimal costs $Q_{i,j}$ and $E_{i,j}$ for the state redistribution with QSI
as follows:
\begin{eqnarray}
c_{i,j}&=&2 Q_{i,j}, \nonumber \\
e_{i,j}&=&Q_{i,j}+ E_{i,j}. \nonumber
\end{eqnarray} 
\end{Lem}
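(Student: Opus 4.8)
The plan is to prove each equality by matching an achievability (upper) bound against a converse (lower) bound, passing between classical and quantum communication through teleportation and superdense coding, and supplying a separate entropic converse for the entanglement cost.

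For achievability I would start from an optimal state redistribution protocol $\mathcal{T}^{n}_{ij}$ attaining the rates $Q_{i,j}$ and $E_{i,j}$, and replace each of its qubit channels by teleportation~\cite{BBCJPW93}, which simulates one qubit channel with two bit channels and one ebit and is asymptotically exact. The composition is a legitimate state merging with QSI of classical rate $2Q_{i,j}$ and entanglement rate $Q_{i,j}+E_{i,j}$, namely the redistribution's own entanglement together with the one ebit consumed per teleported qubit, and its fidelity still tends to one. This gives $c_{i,j}\le 2Q_{i,j}$ and $e_{i,j}\le Q_{i,j}+E_{i,j}$.

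The classical converse follows by reversing this simulation with superdense coding: given any state merging protocol of classical rate $c$, replacing each pair of bit channels by one qubit channel and one ebit produces a valid state redistribution protocol of quantum rate $c/2$, and optimality of $Q_{i,j}$ forces $c/2\ge Q_{i,j}$, so $c_{i,j}\ge 2Q_{i,j}$; with achievability, $c_{i,j}=2Q_{i,j}$.

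The entanglement converse is the main obstacle, because the superdense reduction is lossy: it only certifies that $(c_{i,j}/2,\,e_{i,j}+c_{i,j}/2)$ is an achievable redistribution pair, which constrains the sum $c_{i,j}+e_{i,j}$ but not $e_{i,j}$ alone and leaves a gap of order $Q_{i,j}$. To close it I would argue directly: allowing unlimited classical communication can only lower the entanglement cost, so it suffices to bound the entanglement when classical communication is free, which is precisely the standard state merging of $C_{\mathrm{A}}$ to Bob who already holds $\tilde{B}$. By the merging converse of~\cite{HOW05,HOW07} this costs at least $H(C_{\mathrm{A}}|\tilde{B})$ ebits, and Alice's side information $\tilde{A}$ cannot lower it, since local operations and classical communication create no entanglement across the Alice--Bob cut. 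Summing the two expressions of Lemma~\ref{lem:Opt_Cost} identifies this with the claim,
\begin{eqnarray}
Q_{i,j}+E_{i,j}&=&H(C_{\mathrm{A}})-I(C_{\mathrm{A}};\tilde{B})=H(C_{\mathrm{A}}|\tilde{B}), \nonumber
\end{eqnarray}
so $e_{i,j}\ge Q_{i,j}+E_{i,j}$, and with achievability $e_{i,j}=Q_{i,j}+E_{i,j}$.
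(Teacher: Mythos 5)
Your achievability step is the same as the paper's (teleport each qubit channel of an optimal redistribution protocol, giving $c_{i,j}\le 2Q_{i,j}$ and $e_{i,j}\le Q_{i,j}+E_{i,j}$), but both of your converses take a genuinely different route. The paper runs everything through the coherent bit channel of~\cite{DHW04,H04}, following the optimality proof in~\cite{HOW07}: $c'$ bit channels are replaced by $c'/2$ qubit channels while \emph{generating} $c'/2$ ebits, so $c'<2Q_{i,j}$ contradicts the optimality of $Q_{i,j}$, and a merging protocol with $2Q_{i,j}$ bits and $e'<Q_{i,j}+E_{i,j}$ ebits becomes a redistribution with $e'-Q_{i,j}<E_{i,j}$ ebits, contradicting the optimality of $E_{i,j}$. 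You instead use superdense coding for the classical converse---valid, since the redistribution bound $Q_{i,j}=\frac{1}{2}I(C_{\mathrm{A}};\tilde{R}|\tilde{B})$ of~\cite{DY08,YD09} holds regardless of how much entanglement the protocol consumes, so the extra $c/2$ ebits your simulation burns do no harm---and you correctly diagnose that this lossy reduction cannot yield the entanglement converse (it only gives $e_{i,j}\ge E_{i,j}-Q_{i,j}$), replacing it by a direct entropic bound $e_{i,j}\ge H(C_{\mathrm{A}}|\tilde{B})=Q_{i,j}+E_{i,j}$. Your version even buys something the paper's does not: it bounds the entanglement for \emph{arbitrary} classical rates, whereas the paper's argument for $e_{i,j}$ is tied to protocols using exactly $2Q_{i,j}$ bit channels.

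One justification in your entanglement converse needs repair, though the conclusion stands. Monotonicity ``across the Alice--Bob cut,'' taken literally, fails: across the bare bipartition $\tilde{A}C_{\mathrm{A}}:\tilde{B}$ the state is mixed, and an LOCC monotone only yields $e\ge E(\tilde{A}:C_{\mathrm{B}}\tilde{B})-E(\tilde{A}C_{\mathrm{A}}:\tilde{B})$, which can be zero or negative (take $\tilde{A}$ trivial: the final cross-cut entanglement vanishes and the bound is vacuous even when $H(C_{\mathrm{A}}|\tilde{B})>0$). Moreover, the task with Alice retaining and \emph{acting on} $\tilde{A}$ is not literally standard merging with reference $\tilde{A}\tilde{R}$, since the reference must stay inert, so the converse of~\cite{HOW05,HOW07} cannot be cited as a black box. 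The correct cut is Bob versus Alice together with the reference: there the global state is pure, every step of the protocol is LOCC with respect to that bipartition with $\tilde{R}$ untouched, and entanglement bookkeeping gives $H(\tilde{B})+e_{\mathrm{in}}\ge H(C_{\mathrm{B}}\tilde{B})+e_{\mathrm{out}}$, i.e.\ $e\ge H(C_{\mathrm{A}}\tilde{B})-H(\tilde{B})=H(C_{\mathrm{A}}|\tilde{B})$; your claim that $\tilde{A}$ cannot lower the bound then holds precisely because $\tilde{A}$ sits on Alice's side of this cut. With the cut corrected, your proof is complete and correct.
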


\begin{proof}
We first note that $Q_{i,j}$ qubit channels can be perfectly simulated
with $2Q_{i,j}$ bit channels and $Q_{i,j}$ ebits
by the quantum teleportation~\cite{BBCJPW93}.
Thus by Lemma~\ref{lem:Opt_Cost} Alice and Bob can perform the state merging with QSI
by consuming $2Q_{i,j}$ bit channels and $Q_{i,j}+E_{i,j}$ ebits.

Now,
we show that the costs of $2Q_{i,j}$ bit channels and $Q_{i,j}+E_{i,j}$ ebits are optimal
for the state merging with QSI.

Suppose that the cost of $2Q_{i,j}$ bit channels is not optimal,
that is, 
there exists $c_{i,j}'$ such that $c_{i,j}'<2Q_{i,j}$
and the state merging with QSI can be performed with $c_{i,j}'$ bit channels. 
Then as in the proof of the optimality for the classical communication cost
in the state merging~\cite{HOW07},
$c_{i,j}'$ bit channels can be replaced by $c_{i,j}'/2$ qubit channels and $-c_{i,j}'/2$ ebits
through the coherent bit channel~\cite{DHW04,H04}.
Thus, 
the state redistribution with QSI can be performed with $c_{i,j}'/2$ qubit channels,
which contradicts the optimality of the quantum communication cost
for the state redistribution with QSI in Lemma~\ref{lem:Opt_Cost}.
Therefore, 
the optimal classical communication cost is $2Q_{i,j}$.

Finally, 
suppose that there exists $e_{i,j}'$ such that $e_{i,j}'<Q_{i,j}+E_{i,j}$
and the state merging with QSI can be performed with $e_{i,j}'$ ebits and $2Q_{i,j}$ bit channels.
Since $2Q_{i,j}$ bit channels can be replaced by $Q_{i,j}$ qubit channels and $-Q_{i,j}$ ebits,
it is possible to perform the state redistribution with QSI with $e_{i,j}'-Q_{i,j}$ ebits.
This contradicts the optimality of the entanglement cost for the state redistribution with QSI
in Lemma~\ref{lem:Opt_Cost}.
Therefore, 
the optimal entanglement cost is $Q_{i,j}+E_{i,j}$. 
\end{proof}

By Lemma~\ref{lem:Opt_Cost} and Lemma~\ref{lem:optimal_c_costs}, 
we can obtain that the optimal costs $c_{i,j}$ and $e_{i,j}$ of the state merging with QSI become
\begin{eqnarray} \label{eq:Opt_Cost_SM}
c_{i,j}&=&2H(C_{\mathrm{A}})-I(C_{\mathrm{A}};A_{1}\cdots A_{i})
                            -I(C_{\mathrm{A}};B_{1}\cdots B_{j}), \nonumber \\
e_{i,j}&=& H(C_{\mathrm{A}})-I(C_{\mathrm{A}};B_{1}\cdots B_{j})
\end{eqnarray}
for $0\le i\le m$ and $0\le j\le n$.

%%%%%%%%%%%%%%%%%%%%%%%%%%%%%%%%%%%%%%%%%%%%%%%%%%%%%%%%%%%%%%%%%%%%%%%%%%%%%%%%%%%%%%%%%%%%%%%%%%%%
%%%                                                                                                                           
%%%   Effects of QSI on optimal costs in State transfer with QSI
%%%                                                                                                                           
%%%%%%%%%%%%%%%%%%%%%%%%%%%%%%%%%%%%%%%%%%%%%%%%%%%%%%%%%%%%%%%%%%%%%%%%%%%%%%%%%%%%%%%%%%%%%%%%%%%%
\section{Effects of QSI on optimal resource costs in State transfer with QSI}
\label{sec:Effects_of_QSI}
In this section,
we investigate how the use of (additional) QSI affects the optimal resource costs
in the state transfer with QSI.
For this,
we consider the state transfer with QSI
of $\rho\equiv\rho_{A_{1}\cdots A_{m}C_{\mathrm{A}}B_{1}\cdots B_{n}}$
which is shared by Alice and Bob as in Definition~\ref{def:STwQSI}.
In this state transfer with QSI of $\rho$,
$O$ denotes a type of non-local resources.
For instance,
$O$ can present one of non-local resources $Q$, $E$, $c$, or $e$.
Here,
$Q$ and $c$ are qubit channels and bit channels consumed
in the state transfer with QSI,
respectively.
$E$ ($e$) is ebits consumed/generated
in the state redistribution with QSI
(in the state merging with QSI).
For $0\le i\le m$ and $0\le j\le n$,
if Alice and Bob use QSI $A_{1}\cdots A_{i}$ and $B_{1}\cdots B_{j}$
in the state transfer with QSI of $\rho$
then the following definition enables us to quantify
the effects of their QSI on an optimal resource cost $R$
in the state transfer with QSI of $\rho$.

\begin{Def} \label{def:effect_of_QSI}
Let $\textbf{E}[O]_{i,j}=O_{0,0}-O_{i,j}$.
Then $\textbf{E}[O]_{i,j}$ is called 
the \emph{effect on the optimal resource cost of type $O$
with respect to QSI $A_{1}\cdots A_{i}$ and $B_{1}\cdots B_{j}$}
in the state transfer with QSI of $\rho$.
\end{Def}

The effect $\textbf{E}[O]_{i,j}$ in Definition~\ref{def:effect_of_QSI}
appropriately measures the effect of QSI $A_{1}\cdots A_{i}$ and $B_{1}\cdots B_{j}$
in the state transfer with QSI of $\rho$,
since the only difference
between the optimal resource costs $R_{0,0}$ and $R_{i,j}$
is the use of QSI $A_{1}\cdots A_{i}$ and $B_{1}\cdots B_{j}$.

From the formulas for the optimal costs in Eqs.~(\ref{eq:Opt_Cost}) and~(\ref{eq:Opt_Cost_SM}),
the effect $\textbf{E}[O]_{i,j}$ on the optimal resource cost of type $O$
is readily calculated.
Specifically, for the state redistribution with QSI of $\rho$,
the effects $\textbf{E}[Q]_{i,j}$ and $\textbf{E}[E]_{i,j}$
on the optimal quantum communication cost
and the optimal entanglement cost are given by 
\begin{eqnarray} \label{eq:effects_SRwQSI}
\textbf{E}[Q]_{i,j}&=& \frac{1}{2}I(C_{\mathrm{A}};A_{1}\cdots A_{i})
                       +\frac{1}{2}I(C_{\mathrm{A}};B_{1}\cdots B_{j}), \nonumber \\
\textbf{E}[E]_{i,j}&=&-\frac{1}{2}I(C_{\mathrm{A}};A_{1}\cdots A_{i})
                       +\frac{1}{2}I(C_{\mathrm{A}};B_{1}\cdots B_{j}).
\end{eqnarray}
For the state merging with QSI of $\rho$,
the effects $\textbf{E}[c]_{i,j}$ and $\textbf{E}[e]_{i,j}$
on the optimal classical communication cost
and the optimal entanglement cost are 
\begin{eqnarray} \label{eq:effects_SMwQSI}
\textbf{E}[c]_{i,j}&=& I(C_{\mathrm{A}};A_{1}\cdots A_{i})
                       +I(C_{\mathrm{A}};B_{1}\cdots B_{j}), \nonumber \\
\textbf{E}[e]_{i,j}&=& I(C_{\mathrm{A}};B_{1}\cdots B_{j}).
\end{eqnarray}

From Eqs.~(\ref{eq:effects_SRwQSI}) and~(\ref{eq:effects_SMwQSI}), 
it is observed that
the effects of QSI $A_{1}\cdots A_{i}$ and $B_{1}\cdots B_{j}$
can be decomposed
according to Alice's QSI $A_{1}\cdots A_{i}$ and Bob's QSI $B_{1}\cdots B_{j}$.
This means that
the use of Alice's QSI $A_{1}\cdots A_{i}$ and the use of Bob's QSI $B_{1}\cdots B_{j}$
independently affect the optimal resource costs in the state transfer with QSI of $\rho$.
The second observation is
that all effects of QSI stem from
the correlation between the part $C_{\mathrm{A}}$ and QSI $A_{1}\cdots A_{i}$
(or $B_{1}\cdots B_{j}$).

From these observations,
it follows that
the effect $\textbf{E}[O]_{i,j}$ on the optimal resource cost of type $O$ can be decomposed as
\begin{equation}
\textbf{E}[O]_{i,j}=\textbf{A}[O]_{i}+\textbf{B}[O]_{j}, \nonumber
\end{equation}
where $\textbf{A}[O]_{i}=\textbf{E}[O]_{i,0}$
and $\textbf{B}[O]_{j}=\textbf{E}[O]_{0,j}$.
Here,
$\textbf{A}[O]_{i}$ ($\textbf{B}[O]_{j}$) indicates
the effect of Alice's QSI $A_{1}\cdots A_{i}$ (Bob's QSI $B_{1}\cdots B_{j}$)
on the optimal resource cost of type $O$
for the state transfer with QSI of $\rho$.
This leads us to the following theorem
which provides answers about the first question.  

\begin{Thm} \label{thm:effect_of_QSI}
In the state transfer with QSI of $\rho$,
the effects of Alice's QSI $A_{1}\cdots A_{i}$ are simply expressed as $\textbf{A}[e]_{i}=0$ and
\begin{equation}
\textbf{A}[c]_{i}=2\textbf{A}[Q]_{i}=-2\textbf{A}[E]_{i}
=I(C_{\mathrm{A}};A_{1}\cdots A_{i}). \nonumber
\end{equation}
For the case of Bob's QSI $B_{1}\cdots B_{j}$,
the effects are
\begin{equation}
\textbf{B}[c]_{j}=\textbf{B}[e]_{j}=2\textbf{B}[Q]_{j}=2\textbf{B}[E]_{j}
=I(C_{\mathrm{A}};B_{1}\cdots B_{j}). \nonumber
\end{equation}
\end{Thm}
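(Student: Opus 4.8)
The plan is to obtain the theorem by specializing the decomposition $\textbf{E}[O]_{i,j}=\textbf{A}[O]_{i}+\textbf{B}[O]_{j}$ established just above to its two defining boundary cases, $\textbf{A}[O]_{i}=\textbf{E}[O]_{i,0}$ and $\textbf{B}[O]_{j}=\textbf{E}[O]_{0,j}$, and then reading off the claimed identities from the closed forms already derived in Eqs.~(\ref{eq:effects_SRwQSI}) and~(\ref{eq:effects_SMwQSI}). The only ingredient needed beyond direct substitution is the convention that mutual information with the empty system vanishes: when Bob uses no QSI the system $B_{1}\cdots B_{0}$ is trivial, so $I(C_{\mathrm{A}};B_{1}\cdots B_{0})=0$, and likewise $I(C_{\mathrm{A}};A_{1}\cdots A_{0})=0$ when Alice uses none.

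For Alice's effects I would set $j=0$. Then Eqs.~(\ref{eq:effects_SRwQSI}) and~(\ref{eq:effects_SMwQSI}) collapse to $\textbf{A}[Q]_{i}=\frac{1}{2}I(C_{\mathrm{A}};A_{1}\cdots A_{i})$, $\textbf{A}[E]_{i}=-\frac{1}{2}I(C_{\mathrm{A}};A_{1}\cdots A_{i})$, $\textbf{A}[c]_{i}=I(C_{\mathrm{A}};A_{1}\cdots A_{i})$, and $\textbf{A}[e]_{i}=0$. Multiplying the quantum-communication effect by $2$ and the entanglement effect by $-2$ makes all three nonzero quantities coincide with $I(C_{\mathrm{A}};A_{1}\cdots A_{i})$, which together with $\textbf{A}[e]_{i}=0$ is exactly the first chain of equalities.

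Symmetrically, for Bob I would set $i=0$ and use $I(C_{\mathrm{A}};A_{1}\cdots A_{0})=0$, obtaining $\textbf{B}[Q]_{j}=\textbf{B}[E]_{j}=\frac{1}{2}I(C_{\mathrm{A}};B_{1}\cdots B_{j})$ and $\textbf{B}[c]_{j}=\textbf{B}[e]_{j}=I(C_{\mathrm{A}};B_{1}\cdots B_{j})$; doubling the two halved quantities yields the second chain. There is essentially no obstacle here, since the substantive content is already carried by the decomposition and the formulas of Eqs.~(\ref{eq:effects_SRwQSI}) and~(\ref{eq:effects_SMwQSI}). The one point I would be careful about is the sign asymmetry between $\textbf{A}[E]_{i}$ and $\textbf{B}[E]_{j}$: Alice's QSI enters the entanglement cost with the opposite sign to Bob's, which is precisely why $\textbf{A}[e]_{i}$ cancels to $0$ while $\textbf{B}[e]_{j}$ survives. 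Tracking signs rather than matching magnitudes is the only place a careless argument could slip.
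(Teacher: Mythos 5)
Your proposal is correct and follows the paper's own route exactly: the theorem is obtained by specializing the decomposition $\textbf{E}[O]_{i,j}=\textbf{A}[O]_{i}+\textbf{B}[O]_{j}$ to the boundary cases $j=0$ and $i=0$ and substituting the closed forms of Eqs.~(\ref{eq:effects_SRwQSI}) and~(\ref{eq:effects_SMwQSI}), with the convention that the QMI with a trivial system vanishes. Your attention to the sign asymmetry between $\textbf{A}[E]_{i}$ and $\textbf{B}[E]_{j}$ matches the paper's subsequent discussion of why Alice's QSI cannot reduce the entanglement cost while Bob's can.
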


It is worth mentioning that
since the QMI is always non-negative,
the use of Bob's QSI $B_{1}\cdots B_{j}$ can reduce all optimal resource costs
of the state transfer with QSI
compared to the case that Bob uses no QSI.
On the other hand,
the effects of Alice's QSI are somewhat different.
If Alice uses her QSI $A_{1}\cdots A_{i}$,
then the optimal quantum/classical communication costs can be reduced,
since the effects $\textbf{A}[Q]_{i}$ and $\textbf{A}[c]_{i}$ are non-negative.
However,
from the fact that $\textbf{A}[e]_{i}=0$ and $\textbf{A}[E]_{i}$ is non-positive,
the optimal entanglement cost for the state merging with QSI is unchanged
and that for the state redistribution with QSI can increase.
This means that
even if Alice's QSI is sufficiently large,
the use of the QSI cannot reduce the optimal entanglement cost
of the state transfer with QSI,
and can even increase that of the state merging with QSI.

In order to answer the second question about additional QSI,
we need to consider the state transfer with QSI of $\rho$
which is shared by Alice and Bob as before.
Let $0\le i_1\le i_2\le m$ and $0\le j_1\le j_2\le n$.
In this state transfer with QSI of $\rho$,
Alice and Bob first use QSI $A_{1}\cdots A_{i_1}$ and $B_{1}\cdots B_{j_1}$.
Then they use more QSI $A_{1}\cdots A_{i_2}$ and $B_{1}\cdots B_{j_2}$,
so that QSI $A_{i_1+1}\cdots A_{i_2}$ and $B_{j_1+1}\cdots B_{j_2}$ are additionally used
in this situation.

We define the effects of the use
of the additional QSI $A_{i_1+1}\cdots A_{i_2}$ and $B_{j_1+1}\cdots B_{j_2}$
on the optimal resource cost of type $O$
in the state transfer with QSI of $\rho$ as follows.

\begin{Def} \label{def:additional_QSI_effect}
Let $\textbf{E}[O]_{i_1,j_1}^{i_2,j_2}$ be defined as
\begin{equation}
\textbf{E}[O]_{i_1,j_1}^{i_2,j_2}=\textbf{E}[O]_{i_2,j_2}-\textbf{E}[O]_{i_1,j_1}, \nonumber
\end{equation}
where $\textbf{E}[O]_{i,j}$ is the effect of QSI $A_{1}\cdots A_{i}$ and $B_{1}\cdots B_{j}$
on the optimal resource cost of type $O$
in the state transfer with QSI of $\rho$
as in Definition~\ref{def:effect_of_QSI}.
Then we call $\textbf{E}[O]_{i_1,j_1}^{i_2,j_2}$
the \emph{additional effect on the optimal resource cost of type $O$
with respect to QSI $A_{i_1+1}\cdots A_{i_2}$ and $B_{j_1+1}\cdots B_{j_2}$}
in the state transfer with QSI of $\rho$.
\end{Def}

Since Alice's QSI and Bob's QSI independently affect the optimal resource costs
as shown in Theorem~\ref{thm:effect_of_QSI},
the additional effect $\textbf{E}[O]_{i_1,j_1}^{i_2,j_2}$
on the optimal resource cost of type $O$ can be written
in the form
\begin{equation}
\textbf{E}[O]_{i_1,j_1}^{i_2,j_2}
=\textbf{A}[O]_{i_1}^{i_2}+\textbf{B}[O]_{j_1}^{j_2} \nonumber
\end{equation}
where $\textbf{A}[O]_{i_1}^{i_2}=\textbf{E}[O]_{i_1,0}^{i_2,0}$
and $\textbf{B}[O]_{j_1}^{j_2}=\textbf{E}[O]_{0,j_1}^{0,j_2}$.
In the above equation,
$\textbf{A}[O]_{i_1}^{i_2}$ ($\textbf{B}[O]_{j_1}^{j_2}$)
means the additional effect of Alice's QSI $A_{i_1+1}\cdots A_{i_2}$
(Bob's QSI $B_{j_1+1}\cdots B_{j_2}$).
This together with Theorem~\ref{thm:effect_of_QSI} gives us the following theorem
which explains the effects
of the more QSI $A_{i_1+1}\cdots A_{i_2}$ and $B_{j_1+1}\cdots B_{j_2}$
in the state transfer with QSI of $\rho$.

\begin{Thm} \label{thm:effect_of_additional_QSI}
In the state transfer with QSI of $\rho$,
the additional effects of Alice's QSI $A_{i_1+1}\cdots A_{i_2}$ are given
by $\textbf{A}[e]_{i_1}^{i_2}=0$ and
\begin{eqnarray}
& &\textbf{A}[c]_{i_1}^{i_2}=2\textbf{A}[Q]_{i_1}^{i_2}
 =-2\textbf{A}[E]_{i_1}^{i_2} \nonumber \\
&=&I(C_{\mathrm{A}};A_{i_1+1}\cdots A_{i_2}|A_{1}\cdots A_{i_1}). \nonumber
\end{eqnarray}
For Bob's QSI $B_{j_1+1}\cdots B_{j_2}$,
the additional effects are
\begin{eqnarray}
& &\textbf{B}[c]_{j_1}^{j_2}=\textbf{B}[e]_{j_1}^{j_2}
=2\textbf{B}[Q]_{j_1}^{j_2}=2\textbf{B}[E]_{j_1}^{j_2} \nonumber \\
&=&I(C_{\mathrm{A}};B_{j_1+1}\cdots B_{j_2}|B_{1}\cdots B_{j_1}). \nonumber
\end{eqnarray}
\end{Thm}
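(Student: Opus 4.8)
The plan is to reduce the additional effects entirely to the single-party effects already computed in Theorem~\ref{thm:effect_of_QSI}, and then to invoke the chain rule for the QMI to rewrite the resulting differences of mutual informations as conditional mutual informations. No new operational protocol needs to be analyzed here: everything follows from the formulas already in hand, so the work is organizing the bookkeeping and applying one identity.

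First I would unfold the definitions. By Definition~\ref{def:additional_QSI_effect} together with the identifications $\textbf{A}[O]_{i_1}^{i_2}=\textbf{E}[O]_{i_1,0}^{i_2,0}$ and $\textbf{B}[O]_{j_1}^{j_2}=\textbf{E}[O]_{0,j_1}^{0,j_2}$ introduced just above the statement, the additional single-party effects telescope into differences of the cumulative single-party effects,
\begin{equation}
\textbf{A}[O]_{i_1}^{i_2}=\textbf{A}[O]_{i_2}-\textbf{A}[O]_{i_1},\qquad
\textbf{B}[O]_{j_1}^{j_2}=\textbf{B}[O]_{j_2}-\textbf{B}[O]_{j_1}, \nonumber
\end{equation}
where I have used $\textbf{A}[O]_i=\textbf{E}[O]_{i,0}$ and $\textbf{B}[O]_j=\textbf{E}[O]_{0,j}$.

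Next I would substitute the explicit single-party values supplied by Theorem~\ref{thm:effect_of_QSI}. For Alice this gives, for each resource type, a quantity proportional to $I(C_{\mathrm{A}};A_{1}\cdots A_{i_2})-I(C_{\mathrm{A}};A_{1}\cdots A_{i_1})$, with $\textbf{A}[e]_{i_1}^{i_2}=0$ inherited from $\textbf{A}[e]_i\equiv0$. The one genuine step is to collapse this difference by the chain rule for the QMI,
\begin{equation}
I(C_{\mathrm{A}};A_{1}\cdots A_{i_2})
=I(C_{\mathrm{A}};A_{1}\cdots A_{i_1})
+I(C_{\mathrm{A}};A_{i_1+1}\cdots A_{i_2}\,|\,A_{1}\cdots A_{i_1}), \nonumber
\end{equation}
so that, for instance, $\textbf{A}[Q]_{i_1}^{i_2}=\tfrac{1}{2}I(C_{\mathrm{A}};A_{i_1+1}\cdots A_{i_2}\,|\,A_{1}\cdots A_{i_1})$; the claimed proportionalities $\textbf{A}[c]_{i_1}^{i_2}=2\textbf{A}[Q]_{i_1}^{i_2}=-2\textbf{A}[E]_{i_1}^{i_2}$ then carry over verbatim from the identical proportionalities in Theorem~\ref{thm:effect_of_QSI}.

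The argument for Bob's QSI is word-for-word the same with $A$-systems replaced by $B$-systems and the Bob proportionalities used in place of Alice's; here $\textbf{B}[e]_{j_1}^{j_2}$ is no longer zero but equals the same conditional mutual information $I(C_{\mathrm{A}};B_{j_1+1}\cdots B_{j_2}\,|\,B_{1}\cdots B_{j_1})$. I do not expect a real obstacle: once the telescoping identity and the chain rule are in place the computation is mechanical, so the only point requiring care is correct bookkeeping of the constants—the factors of $\tfrac{1}{2}$ and the sign of $\textbf{A}[E]$—which distinguish the four resource types.
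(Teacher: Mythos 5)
Your proposal is correct and follows exactly the route the paper intends: the paper derives Theorem~\ref{thm:effect_of_additional_QSI} by combining the decomposition $\textbf{E}[O]_{i_1,j_1}^{i_2,j_2}=\textbf{A}[O]_{i_1}^{i_2}+\textbf{B}[O]_{j_1}^{j_2}$ with the single-party formulas of Theorem~\ref{thm:effect_of_QSI} and the chain rule for the QMI, which is precisely your telescoping-plus-chain-rule argument. Your bookkeeping of the factors of $\tfrac{1}{2}$ and the sign of $\textbf{A}[E]_{i_1}^{i_2}$ matches the stated result, so nothing is missing.
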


Remark that in Theorem~\ref{thm:effect_of_additional_QSI}
only the additional effect $\textbf{A}[E]_{i_1}^{i_2}$ is non-positive,
while the other additional effects are non-negative.
Moreover,
by comparing Theorem~\ref{thm:effect_of_QSI} and Theorem~\ref{thm:effect_of_additional_QSI},
it is verified that
the effect $\textbf{A}[O]_{i_1}$ ($\textbf{B}[O]_{j_1}$)
and the additional effect $\textbf{A}[O]_{i_1}^{i_2}$
($\textbf{B}[O]_{j_1}^{j_2}$) on the optimal resource cost of type $O$
can have the same sign,
since the QCMI is always non-negative~\cite{W13}.
This means that the use of more QSI $A_{i_1+1}\cdots A_{i_2}$ and $B_{j_1+1}\cdots B_{j_2}$
can enhance the effects of QSI $A_{1}\cdots A_{i_1}$ and $B_{1}\cdots B_{j_1}$
in the state transfer with QSI of $\rho$.

%%%%%%%%%%%%%%%%%%%%%%%%%%%%%%%%%%%%%%%%%%%%%%%%%%%%%%%%%%%%%%%%%%%%%%%%%%%%%%%%%%%%%%%%%%%%%%%%%%%%
%%%                                                                                                                           
%%%   New operational meanings of QMI and QCMI in terms of QSI
%%%                                                                                                                           
%%%%%%%%%%%%%%%%%%%%%%%%%%%%%%%%%%%%%%%%%%%%%%%%%%%%%%%%%%%%%%%%%%%%%%%%%%%%%%%%%%%%%%%%%%%%%%%%%%%%
\section{New operational meanings of QMI and QCMI in terms of QSI}
\label{sec:New_operational_meanings}
In this section,
we present new operational meanings of the QMI,
the QCMI,
and the chain rule for the QMI.

From the effects of Alice's QSI provided in Theorem~\ref{thm:effect_of_QSI},
we can obtain
the following new operational meanings of the QMI,
which have never been considered before.

\begin{Cor}[Operational meanings of QMI] \label{cor:OM_QMI}
Let $\rho_{CS}$ be a quantum state.
Consider the state merging with QSI of $\rho_{CS}$, 
in which $C$ is merged from Alice to Bob.

(i) If Alice has $S$ and uses it as QSI,
then $I(C;S)$ can be interpreted as how much the classical communication cost can be reduced
compared to the case that Alice uses no QSI.

(ii) If Bob has $S$ and uses it as QSI,
then $I(C;S)$ can be interpreted
as how much both classical communication and entanglement costs can be reduced
compared to the case that Bob uses no QSI.
\end{Cor}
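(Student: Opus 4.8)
The plan is to read off both statements directly from Theorem~\ref{thm:effect_of_QSI} by specializing the general multipartite setup to the bipartite state $\rho_{CS}$. The key point is the interpretation already built into Definition~\ref{def:effect_of_QSI}: the quantity $\textbf{E}[O]_{i,j}=O_{0,0}-O_{i,j}$ is, by construction, the amount by which the optimal cost of type $O$ drops when the QSI $A_{1}\cdots A_{i}$ and $B_{1}\cdots B_{j}$ is used rather than not used. Thus ``how much the cost can be reduced compared to the case that a party uses no QSI'' is literally the value of the corresponding effect, and the corollary amounts to evaluating these effects for $\rho_{CS}$.

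For part (i), I would cast the state merging with QSI of $\rho_{CS}$ into Definition~\ref{def:STwQSI} by setting the transferred part to $C_{\mathrm{A}}=C$, taking $S$ to be Alice's entire QSI, and giving Bob no QSI; that is, identify $S$ with the block $A_{1}\cdots A_{i}$ and set $j=0$. Theorem~\ref{thm:effect_of_QSI} then gives $\textbf{A}[c]_{i}=I(C_{\mathrm{A}};A_{1}\cdots A_{i})=I(C;S)$, and since $\textbf{A}[c]_{i}=c_{0,0}-c_{i,0}$ is by definition the reduction in the optimal classical communication cost, this establishes (i). I would also note, using $\textbf{A}[e]_{i}=0$ from the same theorem, that the entanglement cost is unaffected by Alice's QSI, which is precisely why only the classical communication cost is mentioned in (i).

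For part (ii), I would instead take $S$ to be Bob's entire QSI, i.e.\ identify $S$ with the block $B_{1}\cdots B_{j}$ and set $i=0$. The Bob part of Theorem~\ref{thm:effect_of_QSI} yields $\textbf{B}[c]_{j}=\textbf{B}[e]_{j}=I(C_{\mathrm{A}};B_{1}\cdots B_{j})=I(C;S)$, and these two effects are exactly the reductions $c_{0,0}-c_{0,j}$ and $e_{0,0}-e_{0,j}$ in the optimal classical communication and entanglement costs, respectively. Hence both costs drop by $I(C;S)$, which is (ii).

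There is no substantive obstacle here, since the corollary is a reinterpretation of Theorem~\ref{thm:effect_of_QSI}. The only points requiring care are bookkeeping ones: checking that assigning all of $S$ to one party and nothing to the other correctly reproduces the single-system state $\rho_{CS}$ in the framework of Definition~\ref{def:STwQSI}, and observing that because the QMI is non-negative these effects are genuine reductions (the costs decrease rather than increase), so the word ``reduced'' is justified in both parts.
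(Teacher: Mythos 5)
Your proposal is correct and follows essentially the same route as the paper: the corollary is obtained by specializing Theorem~\ref{thm:effect_of_QSI} (equivalently, the cost formulas in Eq.~(\ref{eq:Opt_Cost_SM})) to the bipartite state $\rho_{CS}$, identifying $S$ with Alice's full QSI and $j=0$ for part (i), and with Bob's full QSI and $i=0$ for part (ii), with the reduction interpretation supplied by Definition~\ref{def:effect_of_QSI}. Your added remarks --- that $\textbf{A}[e]_{i}=0$ explains why (i) omits the entanglement cost, and that non-negativity of the QMI justifies the word ``reduced'' --- are consistent with the paper's surrounding discussion.
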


The additional effects of Alice's more QSI in Theorem~\ref{thm:effect_of_additional_QSI}
provides us new operational meanings of the QCMI,
which have never appeared in any previous literature.

\begin{Cor}[Operational meanings of QCMI] \label{cor:OM_QCMI}
Let $\rho_{CS_1S_2}$ be a quantum state.
Consider the state merging with QSI of $\rho_{CS_1S_2}$, in which $C$ is merged from Alice to Bob.

(i) If Alice has $S_1S_2$ and uses it as QSI,
then $I(C;S_2|S_1)$ means how much the classical communication cost can be more reduced
compared to the case that Alice uses QSI $S_1$ only.

(ii) If Bob has $S_1S_2$ and uses it as QSI,
then $I(C;S_2|S_1)$ means
how much both classical communication and entanglement costs can be more reduced
compared to the case that Bob uses QSI $S_1$ only.
\end{Cor}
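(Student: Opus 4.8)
The plan is to obtain both parts as immediate specializations of Theorem~\ref{thm:effect_of_additional_QSI}, interpreting the additional effects it computes as cost reductions. The only conceptual ingredient is the reading of Definition~\ref{def:additional_QSI_effect}: unwinding the definitions gives $\textbf{A}[O]_{i_1}^{i_2}=O_{i_1,0}-O_{i_2,0}$ and $\textbf{B}[O]_{j_1}^{j_2}=O_{0,j_1}-O_{0,j_2}$, so each additional effect is literally the amount by which an optimal cost drops when one more block of QSI is brought into play. Hence it suffices to read off the relevant additional effects for the state merging costs $c$ and $e$ and to identify them with the reductions described in the corollary.

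For part~(i), I would regard Alice's QSI as consisting of exactly two subsystems, taking $A_1=S_1$ and $A_2=S_2$ (formally $m=2$ and $C_{\mathrm{A}}=C$ in Definition~\ref{def:STwQSI}), with $i_1=1$, $i_2=2$, and no QSI on Bob's side. The state merging with QSI of $\rho_{CS_1S_2}$ then coincides with the general protocol, and Theorem~\ref{thm:effect_of_additional_QSI} gives $\textbf{A}[e]_{1}^{2}=0$ together with $\textbf{A}[c]_{1}^{2}=I(C;A_2|A_1)=I(C;S_2|S_1)$. Since $\textbf{A}[c]_{1}^{2}=c_{1,0}-c_{2,0}$ is exactly the further decrease in the optimal classical communication cost when Alice upgrades from using $S_1$ alone to using $S_1S_2$, and since the QCMI is non-negative, this is a genuine reduction of the claimed size; the entanglement cost is left unchanged because $\textbf{A}[e]_{1}^{2}=0$.

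For part~(ii), I would instead assign the two blocks to Bob, setting $B_1=S_1$ and $B_2=S_2$ with $j_1=1$, $j_2=2$. Theorem~\ref{thm:effect_of_additional_QSI} now yields $\textbf{B}[c]_{1}^{2}=\textbf{B}[e]_{1}^{2}=I(C;B_2|B_1)=I(C;S_2|S_1)$, and since $\textbf{B}[c]_{1}^{2}=c_{0,1}-c_{0,2}$ and $\textbf{B}[e]_{1}^{2}=e_{0,1}-e_{0,2}$ are the additional decreases in the two optimal costs, the claim follows. There is no real obstacle here, as all achievability and optimality have already been settled in Lemmas~\ref{lem:Opt_Cost} and~\ref{lem:optimal_c_costs} and in Theorem~\ref{thm:effect_of_additional_QSI}; the single point to check is that collapsing the many-subsystem description into the composite systems $S_1$ and $S_2$ preserves the expression, which is immediate because $I(C;A_{i_1+1}\cdots A_{i_2}|A_1\cdots A_{i_1})$ depends on the QSI only through the joint systems $A_1\cdots A_{i_1}$ and $A_{i_1+1}\cdots A_{i_2}$, here exactly $S_1$ and $S_2$.
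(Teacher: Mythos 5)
Your proposal is correct and follows essentially the same route as the paper, which obtains Corollary~\ref{cor:OM_QCMI} by specializing Theorem~\ref{thm:effect_of_additional_QSI} (itself derived from Eqs.~(\ref{eq:Opt_Cost}) and~(\ref{eq:Opt_Cost_SM}) via the chain rule) to the two-block cases $A_1=S_1,A_2=S_2$ and $B_1=S_1,B_2=S_2$. Your explicit unwinding of the sign conventions, $\textbf{A}[O]_{i_1}^{i_2}=O_{i_1,0}-O_{i_2,0}$ and $\textbf{B}[O]_{j_1}^{j_2}=O_{0,j_1}-O_{0,j_2}$, correctly identifies the additional effects as cost reductions, exactly as the paper intends.
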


We note that other operational meanings of the QMI and the QCMI
have been found in literature~\cite{HOW07,DY08}.
In both meanings,
one argument of the QMI and the QCMI is interpreted as the reference system.
This means that the operational meanings are explained
in terms of the reference system
which has nothing to do with the corresponding operational tasks.
On the other hand,
our operational meanings in Corollary~\ref{cor:OM_QMI} and Corollary~\ref{cor:OM_QCMI}
are intuitive and natural
since they only involve Alice's and Bob's systems without mentioning the reference.

In addition,
there is one more difference between our operational meanings and the others.
We first note that
each of the operational meanings
for the quantum conditional entropy~\cite{HOW07},
the QCMI~\cite{DY08},
and the min- and max-entropies~\cite{KRS09}
is obtained from one concrete operation.
However,
the state transfer with QSI can describe various operational situations
in which more (or less) QSI can be used.
From comparing these situations,
we can see that
the effects of QSI can be naturally derived,
and hence the QMI and the QCMI can be operationally interpreted
with respect to the effects,
even though each of them does not correspond to any concrete operation.

We furthermore remark that
if QSI $S_1S_2$ can be almost produced from QSI $S_1$
then the optimal cost of the state merging with QSI $S_1S_2$ is almost
the same as one of the state merging with QSI $S_1$ only.
Recently, it has been shown that
there is an important relation between the QCMI
and the recovery map through the Markov chain condition~\cite{FR15},
that is,
for any state $\rho=\rho_{CS_1S_2}$, 
there exists a quantum operation $R_{S_1\to S_1S_2}$ 
such that 
\begin{equation} \label{eq:FR15}
F(\rho,R_{S_1\to S_1S_2}\left(\rho_{CS_1}\right))\ge 2^{-\frac{1}{2}I(C;S_2|S_1)_\rho}.
\end{equation}
This implies that the converse of our above remark is also true.
Thus we can obtain the following corollary.
\begin{Cor} \label{Cor:Recover}
In the state merging with QSI $S_1S_2$,
the amount of the reduced cost 
by adding QSI $S_2$ to QSI $S_1$ is close to zero
if and only if
the QSI $S_1S_2$ can be almost recovered from the QSI $S_1$.
\end{Cor}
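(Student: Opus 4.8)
The plan is to establish the biconditional in Corollary~\ref{Cor:Recover} by identifying the ``reduced cost'' with the quantity $I(C;S_2|S_1)$ and then invoking the recovery bound~(\ref{eq:FR15}) together with a matching lower bound on the fidelity. First I would make the statement quantitative: by Corollary~\ref{cor:OM_QCMI}, adding QSI $S_2$ to $S_1$ in the state merging reduces the classical communication cost by exactly $\textbf{A}[c]_{i_1}^{i_2}=I(C;S_2|S_1)$ (and analogously for Bob), so the phrase ``the amount of the reduced cost is close to zero'' translates precisely into $I(C;S_2|S_1)\approx 0$. Thus the corollary reduces to the claim that $I(C;S_2|S_1)$ is small if and only if $\rho_{CS_1S_2}$ can be approximately recovered from $\rho_{CS_1}$ via some channel $R_{S_1\to S_1S_2}$, i.e.\ the state is an approximate quantum Markov chain $C-S_1-S_2$.

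For the ``only if'' direction (small QCMI implies recoverability), I would simply cite the Fawzi--Renner bound~(\ref{eq:FR15}): when $I(C;S_2|S_1)$ is close to zero, the exponent $-\frac{1}{2}I(C;S_2|S_1)$ is close to zero, so $2^{-\frac{1}{2}I(C;S_2|S_1)}$ is close to $1$, forcing $F(\rho,R_{S_1\to S_1S_2}(\rho_{CS_1}))$ close to $1$. For the ``if'' direction (recoverability implies small QCMI), I would run the argument in reverse using monotonicity of the QCMI. The key structural fact is that if $\rho_{CS_1S_2}$ is close in fidelity to a recovered state $R_{S_1\to S_1S_2}(\rho_{CS_1})$, then the latter state has vanishing conditional mutual information, because applying a channel acting only on $S_1\to S_1S_2$ cannot create correlation between $C$ and $S_2$ conditioned on $S_1$; more precisely the data-processing (monotonicity) inequality for the QCMI shows $I(C;S_2|S_1)_{R(\rho_{CS_1})}=0$ since $S_2$ is produced locally from $S_1$. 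By continuity of the QCMI (an Alicki--Fannes / Fannes--Audenaert-type bound), closeness of $\rho$ to this zero-QCMI state then forces $I(C;S_2|S_1)_\rho$ to be small.

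The hard part will be the ``if'' direction, specifically controlling $I(C;S_2|S_1)_\rho$ from an approximate recoverability statement phrased in terms of fidelity. The continuity estimate for the QCMI requires translating a fidelity bound into a trace-distance bound (via the Fuchs--van~de~Graaf inequalities) and then applying a dimension-dependent continuity bound, so strictly speaking one obtains an approximate rather than exact quantitative equivalence; I would state the corollary at the level of the qualitative biconditional (``close to zero if and only if almost recoverable'') that this continuity argument supports, which is exactly how it is phrased. I would also note that the argument is symmetric in the roles of Alice and Bob, so by the two parts of Corollary~\ref{cor:OM_QCMI} it applies verbatim whether $S_1S_2$ is held by Alice (reduced classical cost) or by Bob (reduced classical and entanglement costs); the recoverability condition and the governing quantity $I(C;S_2|S_1)$ are identical in both cases.
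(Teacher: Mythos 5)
Your reduction of the corollary to the statement ``$I(C;S_2|S_1)$ small $\iff$ $\rho_{CS_1S_2}$ approximately recoverable from $\rho_{CS_1}$'' is exactly the paper's framing, and your ``only if'' direction (small QCMI forces $F\ge 2^{-\frac{1}{2}I(C;S_2|S_1)}$ close to $1$ via the Fawzi--Renner bound) coincides with the paper's argument for that direction. The problem is in your ``if'' direction, where you assert as the key structural fact that $I(C;S_2|S_1)_{R_{S_1\to S_1S_2}(\rho_{CS_1})}=0$ ``by data processing, since $S_2$ is produced locally from $S_1$.'' This is false. The data-processing inequality for the QCMI covers channels applied to the argument systems $C$ or $S_2$; it does not cover a channel that consumes the \emph{conditioning} system $S_1$ and outputs $S_1S_2$. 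Concrete counterexample: take $\rho_{CS_1}=\ket{\Phi^+}\bra{\Phi^+}$ maximally entangled and the isometry $\ket{i}_{S_1}\mapsto\ket{i}_{S_1}\ket{i}_{S_2}$; the output is the GHZ state, for which $I(C;S_2|S_1)=1+1-1-0=1\neq 0$. A state of the form $\Lambda_{S_1\to S_1S_2}(\omega_{CS_1})$ is a Markov chain only when the map reproduces the $CS_1$ marginal of its own output, and a recovery map achieving fidelity $1-\varepsilon$ preserves that marginal only approximately, so ``exactly zero'' is not available even in the intended setting.

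The step can be repaired, but it needs a different inequality: writing $\sigma=R_{S_1\to S_1S_2}(\rho_{CS_1})$, the chain rule gives $I(C;S_2|S_1)_\sigma=I(C;S_1S_2)_\sigma-I(C;S_1)_\sigma$, and genuine data processing (a channel on the second argument of the QMI) gives $I(C;S_1S_2)_\sigma\le I(C;S_1)_\rho$, so $I(C;S_2|S_1)_\sigma\le I(C;S_1)_\rho-I(C;S_1)_\sigma$, which is small by continuity of the QMI because $\sigma_{CS_1}\approx\rho_{CS_1}$ when $F(\rho,\sigma)$ is high; your Fuchs--van de Graaf plus Alicki--Fannes transfer from $\sigma$ to $\rho$ then finishes as you planned. (Note the GHZ example saturates this bound: there $I(C;S_1)_\rho-I(C;S_1)_\sigma=2-1=1$.) It is also worth knowing that the paper proves this direction without any entropic continuity at all: it argues operationally that if $S_1S_2$ can be almost produced from $S_1$, the party holding $S_1$ simply applies the recovery map locally and then runs the protocol that uses QSI $S_1S_2$, so the optimal merging cost with QSI $S_1$ is almost the same as with QSI $S_1S_2$ --- a one-line simulation argument at the level of protocols rather than entropies, which sidesteps the dimension-dependent continuity bounds your route requires.
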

Moreover, the inequality~(\ref{eq:FR15}) also implies that
if the fidelity of its left-hand side decreases
then the QCMI $I(C;S_2|S_1)$ increases.
This means that
if QSI $S_1S_2$ cannot be properly recovered from QSI $S_1$
then the state merging with QSI $S_1S_2$ can have 
the more reduced optimal cost 
than that of the state merging with QSI $S_1$.

The chain rule~\cite{W13}
for the QMI is that
\begin{align}
&I(C;S_{1}\cdots S_{n}) \nonumber \\
&=I(C;S_{1})+I(C;S_{2}|S_{1})+\cdots+I(C;S_{n}|S_{1}\cdots S_{n-1}) \nonumber \\
&=I(C;S_{1}\cdots S_{i})+I(C;S_{i+1}\cdots S_{n}|S_{1}\cdots S_{i}) \label{eq:chain_rule}
\end{align}
for $1\le i\le n$,
where the first equality is the original chain rule 
but it can be simply rewritten by exploiting the rightmost side in Eq.~(\ref{eq:chain_rule}).  
From the concept of the state merging with QSI,
we can interpret the chain rule in Eq.~(\ref{eq:chain_rule}) as follows.
In the state merging with QSI,
the cost reduced by using the whole QSI $S_{1}\cdots S_{n}$ is equal 
to the sum of the cost reduced by using the partial QSI $S_{1}\cdots S_{i}$ 
and the cost more reduced by using the additional QSI $S_{i+1}\cdots S_{n}$.

%%%%%%%%%%%%%%%%%%%%%%%%%%%%%%%%%%%%%%%%%%%%%%%%%%%%%%%%%%%%%%%%%%%%%%%%%%%%%%%%%%%%%%%%%%%%%%%%%%%%
%%%                                                                                                                           
%%%   Examples of State transfer with QSI
%%%                                                                                                                           
%%%%%%%%%%%%%%%%%%%%%%%%%%%%%%%%%%%%%%%%%%%%%%%%%%%%%%%%%%%%%%%%%%%%%%%%%%%%%%%%%%%%%%%%%%%%%%%%%%%%
\section{Examples of State transfer with QSI}
\label{sec:Examples_of_STwQSI}
\begin{figure}
\centering
\includegraphics[width=.9\linewidth,trim=0cm 0cm 0cm 0cm]{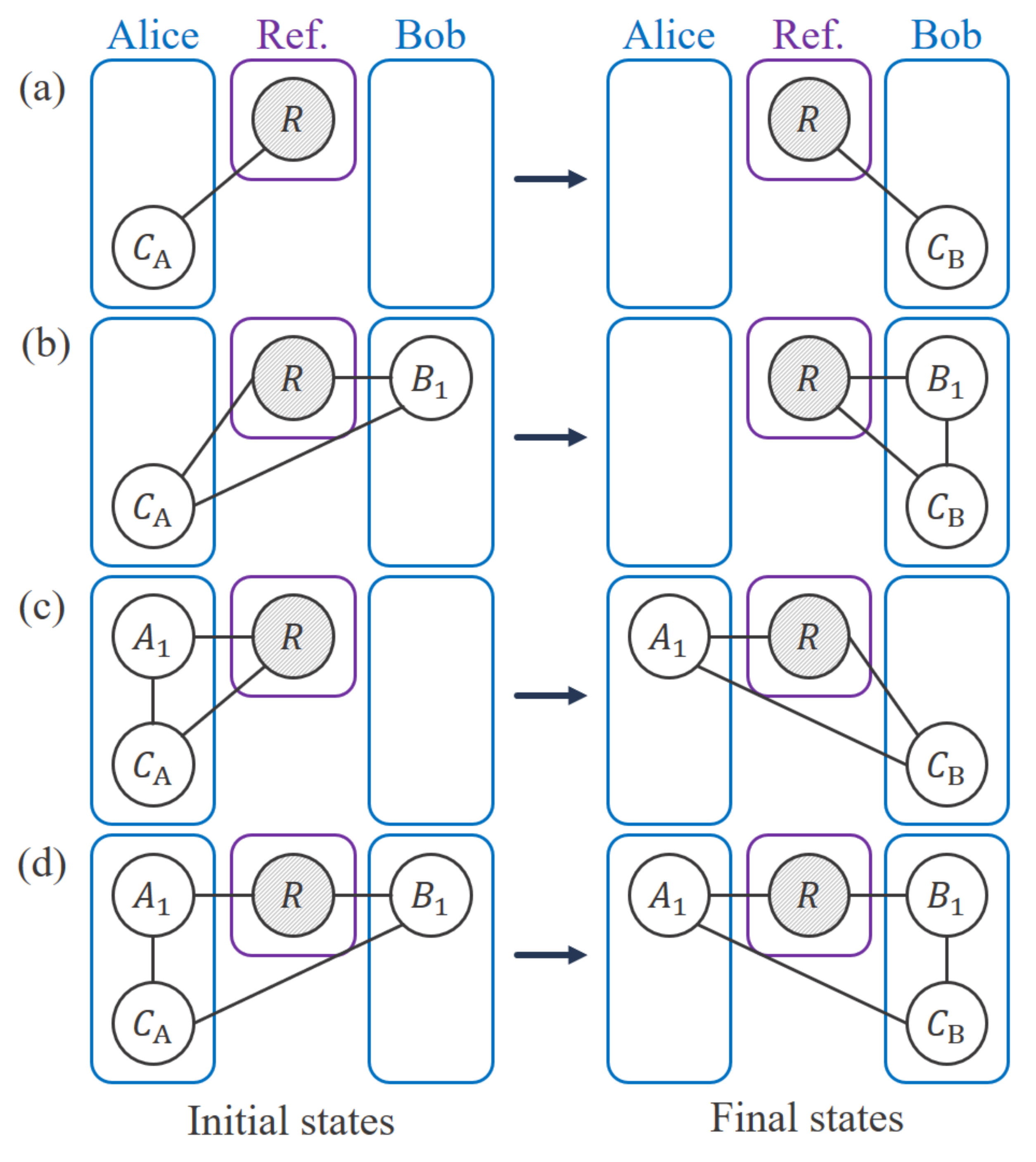}
\caption{
The initial and final states for protocols with qubit/bit channels
which can be classified into 
(a) SC/QT,
(b) FQSW/SM, 
(c) FQRS/GQT,
(d) SR/GSM,
according to the use of QSI. 
$C_{\mathrm{A}}$ is transferred from Alice to Bob,
$R$ is the reference, 
and $A_1$ and $B_1$ are Alice's and Bob's QSI, respectively.
}
\label{fig:Examples}
\end{figure}

Our protocol includes many well-known protocols of quantum information theory
in the sense that their optimal resource costs directly obtained
from Eqs.~(\ref{eq:Opt_Cost}) and~(\ref{eq:Opt_Cost_SM}).
We present four protocols which exploit qubit channels and other four protocols using bit channels.
Denote $Q$ and $E$ by the optimal quantum communication and entanglement costs.

(i) {\it Schumacher compression} (SC):
In the state redistribution with QSI
of $\ket{\psi}_{A_{1}\cdots A_{m}C_{\mathrm{A}}B_{1}\cdots B_{n}R}$,
if any QSI does not exist, that is, $m=n=0$, 
then the protocol becomes the SC~\cite{S95} as depicted in Fig.~\ref{fig:Examples}~(a).
From Eq.~(\ref{eq:Opt_Cost}),
we have $Q=H(C_{\mathrm{A}})$ and $E=0$, which are the optimal resource costs for SC.

(ii) {\it Fully quantum Slepian-Wolf} (FQSW):
FQSW~\cite{D06,ADHW09} is described in Fig.~\ref{fig:Examples}~(b),
which is a special case of our state redistribution with QSI 
if Alice does not have any QSI but Bob can use his QSI, that is, $m=0$ and $n=1$.
$Q=H(C_{\mathrm{A}})-\frac{1}{2}I(C_{\mathrm{A}};B_1)$ and $E=-\frac{1}{2}I(C_{\mathrm{A}};B_1)$ 
computed from Eq.~(\ref{eq:Opt_Cost})
are identical to the optimal costs of FQSW.

(iii) {\it Fully quantum reverse Shannon} (FQRS):
FQRS~\cite{D06,ADHW09} can be considered as the state redistribution with QSI
when Alice has QSI $A_1$ but Bob does not as in Fig.~\ref{fig:Examples}~(c),
that is,
$m=1$ and $n=0$.
Using Eq.~(\ref{eq:Opt_Cost}),
its optimal costs are given 
by $Q=H(C_{\mathrm{A}})-\frac{1}{2}I(C_{\mathrm{A}};A_1)$ and $E=\frac{1}{2}I(C_{\mathrm{A}};A_1)$,
which are equivalent to the optimal costs of FQRS.

(iv) {\it State redistribution} (SR):
In SR~\cite{DY08,YD09},
both Alice and Bob have QSI $A_1$ and $B_1$ as (d) in Fig.~\ref{fig:Examples},
that is, $m=1$ and $n=1$.
Its optimal resource costs 
$Q=H(C_{\mathrm{A}})-\frac{1}{2}I(C_{\mathrm{A}};A_1)-\frac{1}{2}I(C_{\mathrm{A}};B_1)$ 
and $E=\frac{1}{2}I(C_{\mathrm{A}};A_1)-\frac{1}{2}I(C_{\mathrm{A}};B_1)$
can be achieved
from Eq.~(\ref{eq:Opt_Cost}).

As mentioned earlier,
we continue to see the protocols with bit channels,
which are contained in the state merging with QSI
of $\ket{\psi}_{A_{1}\cdots A_{m}C_{\mathrm{A}}B_{1}\cdots B_{n}R}$.
Let us now define $c$ and $e$ as the optimal classical communication and entanglement costs, 
respectively.

(v) {\it Quantum teleportation} (QT):
In the original QT~\cite{BBCJPW93}, Alice and Bob can teleport only one qubit unknown to them.
However, 
we here assume that they asymptotically teleport an initial state known to themselves.
Then its optimal costs can be obtained as
$c=2H(C_{\mathrm{A}})$ and $e=H(C_{\mathrm{A}})$ from Eq.~(\ref{eq:Opt_Cost_SM}).
This is described in (a) of Fig.~\ref{fig:Examples}, as in the case of SC.

(vi) {\it State merging} (SM):
In SM~\cite{HOW05,HOW07},
Alice has no QSI but Bob has QSI, 
as depicted in (b) of Fig.~\ref{fig:Examples}.
This is equivalent to FQSW except for using different kind of channels.
From Eq.~(\ref{eq:Opt_Cost_SM}), its optimal costs $c=2H(C_{\mathrm{A}})-I(C_{\mathrm{A}};B_1)$
and $e=H(C_{\mathrm{A}})-I(C_{\mathrm{A}};B_1)$ can be obtained.

(vii) {\it Generalized quantum teleportation} (GQT) and \emph{Generalized state merging} (GSM):
In QT and SM,
if Alice has QSI $A_1$ and exploits it for teleporting and merging $C_{\mathrm{A}}$,
then we call these protocols GQT and QSM, 
which are seen in (c) and (d) of Fig.~\ref{fig:Examples},
respectively.
We note that the concepts of the GQT and the GSM have been known
in literature~\cite{F04,LKPL09,DY08,O08},
but the optimal resource costs have not precisely been mentioned.
By using Eq.~(\ref{eq:Opt_Cost_SM}),
it can be shown that $c=2H(C_{\mathrm{A}})-I(C_{\mathrm{A}};A_1)$ and $e=H(C_{\mathrm{A}})$
are the optimal costs for GQT,
and $c=2H(C_{\mathrm{A}})-I(C_{\mathrm{A}};A_1)-I(C_{\mathrm{A}};B_1)$
and $e=H(C_{\mathrm{A}})-I(C_{\mathrm{A}};B_1)$ for GSM.

So far,
we have seen that the state transfer with QSI includes 
many quantum information protocols to transfer Alice's information to Bob,
and our protocol is the most generalized one when taking account of Alice's and Bob's QSI.

%%%%%%%%%%%%%%%%%%%%%%%%%%%%%%%%%%%%%%%%%%%%%%%%%%%%%%%%%%%%%%%%%%%%%%%%%%%%%%%%%%%%%%%%%%%%%%%%%%%%
%%%                                                                                                                           
%%%   Conclusion
%%%                                                                                                                           
%%%%%%%%%%%%%%%%%%%%%%%%%%%%%%%%%%%%%%%%%%%%%%%%%%%%%%%%%%%%%%%%%%%%%%%%%%%%%%%%%%%%%%%%%%%%%%%%%%%%
\section{Conclusion} \label{sec:Conclusion}
We have considered the state transfer with QSI
as a general quantum communication protocol,
and have determined its optimal resource costs
when Alice and Bob use their QSI.
We also have investigated the effects of (additional) QSI
on the optimal resource costs
in the state transfer with QSI.
Based on this study,
we have provided new operational meanings 
of the QMI 
and the QCMI,
in addition to a new operational interpretation 
of the chain rule for the QMI,
which is naturally understandable
with respect to the state transfer with QSI.
In addition,
we expect that our state transfer with QSI provides 
further understandings of specific multipartite quantum states,
such as the Greenberger-Horne-Zeilinger state~\cite{GHZ89} and the Werner state~\cite{W89}.

Throughout this paper,
we have assumed that the initial states of the protocols
are independent and identically distributed (i.i.d.).
However,
there have been some results~\cite{B08,BCR11,RR12,DH13,BCT16}
which do not take into account the i.i.d. assumption.
Since these results have provided theoretical bases
for the proofs of some practical applications,
such as quantum key distribution with finite resources~\cite{TLGR12,CXCLTL14},
it can be helpful to devise 
the one-shot version of our work.
For this, recent results about resource costs
for the one-shot quantum state redistribution~\cite{BCT16,AJW17}
might be useful.

Furthermore,
it would be interesting to investigate 
the optimal resource costs of the state transfer with QSI 
under various conditions.
For instance,
we can assume that 
Alice and Bob can consume non-local noisy resources~\cite{DHW04,DHW08},
or they can use a local resource,
such as maximally coherent states~\cite{BCP14,SSDBA15,SCRBWL16},
as in the incoherent quantum state merging~\cite{SCRBWL16}
and the coherence distillation~\cite{WY16}.

%%%%%%%%%%%%%%%%%%%%%%%%%%%%%%%%%%%%%%%%%%%%%%%%%%%%%%%%%%%%%%%%%%%%%%%%%%%%%%%%%%%%%%%%%%%%%%%%%%%%
%%%                                                                                                                           
%%%   Acknowledgments
%%%                                                                                                                           
%%%%%%%%%%%%%%%%%%%%%%%%%%%%%%%%%%%%%%%%%%%%%%%%%%%%%%%%%%%%%%%%%%%%%%%%%%%%%%%%%%%%%%%%%%%%%%%%%%%%
%{\it Acknowledgment.}---
We thank Alexander Streltsov for very helpful comments.
This research was supported by Basic Science Research Program 
through the National Research Foundation of Korea (NRF) funded 
by the Ministry of Science and ICT (NRF-2016R1A2B4014928).

\bibliography{STwQSI}
\end{document}